\newcommand{\ket}[1]{\left| #1 \right>}
\newcommand{\cpx}[0]{\mathbb{C}}
\newcommand{\bvec}[1]{{\bf#1}}
\newcommand{\twomatrix}[1]{\left(\begin{array}{cc} #1 \end{array}\right)}
\newcommand{\1}{\openone}
\theoremstyle{plain} \newtheorem{thm}{Theorem}
\theoremstyle{plain} 
\theoremstyle{definition} \newtheorem*{rmk}{Remark}
\theoremstyle{definition} 
\theoremstyle{plain} \newtheorem{lemma}[thm]{Lemma}
\theoremstyle{definition} 
\theoremstyle{plain} 
\theoremstyle{plain}
\begin{document}

\title{Achievability of two qubit gates using linear optical elements and post-selection}

\author{Josh Cadney}
\affiliation{School of Mathematics, University of Bristol, Bristol BS8 1TW, United Kingdom}

\date{31 March 2014}

\begin{abstract}
We study the class of two qubit gates which can be achieved using only linear optical elements (beam splitters and phase shifters) and post-selection. We are able to exactly characterize this set, and find that it is impossible to implement most two qubit gates in this way. The proof also gives rise to an algorithm for calculating the optimal success probability of those gates which are achievable.
\end{abstract}

\maketitle

\section{Introduction} \label{intro}
Linear optical quantum computing is a promising architecture for building a universal quantum computer, due to the high fidelity of linear optical elements (beam splitters and phase shifters) and the insensitivity of photons to decoherence. It has been shown \cite{KLM01} that optical systems are indeed universal for quantum computation if it is possible to implement (near perfectly): linear optical elements, single photon sources, photon number detectors and adaptive feedback. Unfortunately, the implementation of all these things simultaneously is still far off. In particular, the use of feedback within an optical circuit, and producing photon sources with a high probability of success, are very challenging.

Consequently, many cutting-edge experiments in the field focus on demonstrating some subset of these resources, as a proof of principle. The resulting quantum circuits are often allowed to succeed, with probability $p<1$, conditioned on certain measurement outcomes. In this paper, we focus on a method of performing quantum gates which uses only linear optical elements and post-selection (outlined in Figure \ref{circuit_1}). This method has been used to demonstrate a CNOT gate which succeeds with probability $\frac19$ \cite{RLB02,HT02,OPW03}, a reconfigurable controlled two-qubit operation \cite{SVP12,LPN11} and in small-scale versions of Shor's algorithm \cite{PMO09,MLL12}.

We study these experiments from a theoretical point of view, focusing on two questions: `what gates can we perform using this method?' and `what is the optimal probability of success?'. The second question has been studied before; in \cite{KOE10} the optimal success probability for controlled-phase gates was derived, and a framework for solving the problem for a general gate was discussed in \cite{Kie08}. However, it seems that the first, more fundamental, question has so far been ignored. This project focuses on two qubit gates, and it was our initial aim to design a circuit which could perform an arbitrary two qubit gate. However, we found that this is not possible. In fact, our results show that \emph{almost all} two qubit gates cannot be achieved in this scenario.

We proceed as follows: in section \ref{problem} we detail the set up we are studying, and outline the problem. In section \ref{result} we present our main result: a complete characterization of those two qubit gates which can be achieved in this set up. In section \ref{success_prob} we give an algorithm for computing the optimal success probability of those gates which can be performed. Finally in section \ref{open} we mention some open problems.

\section{The Problem} \label{problem}

\begin{figure}[t]
	\includegraphics[width=0.5\textwidth]{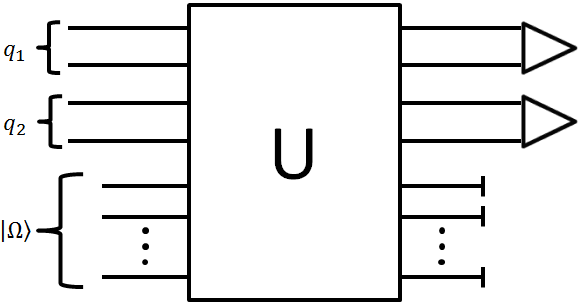}
	\caption{The circuit we consider, containing $2$ photons in $N$ modes. Initially, the first pair of modes contains one photon, encoding one logical qubit. Similarly, the second pair of modes encodes a second logical qubit. The remaining $N-4$ auxiliary modes are initially empty. The circuit $U$ is an arbitrary series of beam splitters and phase shifters. After performing $U$ we post-select on finding one photon in each of the first two pairs of modes. We then disregard the auxiliary modes.}
	\label{circuit_1}
\end{figure}

The circuit we consider is shown in Figure \ref{circuit_1}. We have two photons in $N$ modes. Initially, the first two modes contain one photon, which encodes a logical qubit via a dual rail encoding. Similarly, modes $3$ and $4$ contain the second photon, which encodes our second logical qubit. The remaining $N-4$ auxiliary modes are empty. Concretely, let $\ket{\Omega}$ denote the vacuum state, and $a_0^\dagger, \ldots, a_{N-1}^\dagger$ denote the creation operators of the $N$ modes. Then the four computational basis states, which correspond to the logical states $\ket{00}, \ket{01}, \ket{10}, \ket{11}$ respectively, are
\begin{equation}
	a_0^\dagger a_2^\dagger \ket\Omega, a_0^\dagger a_3^\dagger \ket\Omega, a_1^\dagger a_2^\dagger \ket\Omega, a_1^\dagger a_3^\dagger \ket\Omega
\end{equation}
We refer to the span of these states as the \emph{computational subspace}, and we assume that the initial state of the circuit is in the computational subspace.

The initial state is then acted on by the linear optical component $U$, which is allowed to be any sequence of beam splitters and phase shifters. It is well known that any unitary transformation of modes can be achieved in this way \cite{RZBB94}. Therefore, the effect of this component is to map
\begin{equation}
	a_i^\dagger \to \sum_{j=0}^{N-1} u_{ji} a_j^\dagger
\end{equation}
for some $N \times N$ unitary matrix (also denoted by $U$) with entries $u_{ij}$. This means that the effect of the component $U$ on a computational basis state $a_i^\dagger a_j^\dagger \ket{\Omega}$ is as follows
\begin{equation} \label{U_computational}
	a_i^\dagger a_j^\dagger \ket{\Omega} \to \sum_{k,l=0}^{N-1} u_{ki} u_{lj} a_k^\dagger a_l^\dagger \ket{\Omega}
\end{equation}

In the final stage of the circuit, we discard the $N-4$ auxiliary modes, and we post-select on finding the state in the computational subspace. This is equivalent to requiring that there is exactly one photon in each of the first two pairs of modes. This requires us to perform a measurement on each pair of modes. In theory this measurement can be performed non-destructively \cite{KLD02} meaning that the resulting state can then be passed on to future operations, however this requires at least two additional photons. In current experiments the state is usually destroyed at this time. Mathematically, we model this step as a projection onto the computational subspace.

Putting this together, by applying the post-selection to the resulting state in \eqref{U_computational}, we find that the effect of the circuit on a computational basis state is given by
\begin{equation} \label{circuit_output}
\begin{split}
    a_i^\dagger a_j^\dagger \ket{\Omega} \to \left[ (u_{0i}u_{2j} + u_{2i}u_{0j}) a_0^\dagger a_2^\dagger \right. + (u_{0i}u_{3j} &+ u_{3i}u_{0j}) a_0^\dagger a_3^\dagger \\
    + (u_{1i}u_{2j} + u_{2i}u_{1j}) a_1^\dagger a_2^\dagger &+ \left. (u_{1i}u_{3j} + u_{3i}u_{1j}) a_1^\dagger a_3^\dagger \right] \ket{\Omega},
\end{split}
\end{equation}
so, for example, in the notation of the computational subspace we have
\begin{equation}
	\ket{00} \to (u_{00}u_{22} + u_{20}u_{02})\ket{00} + (u_{00}u_{32} + u_{30}u_{02})\ket{01} + (u_{10}u_{22} + u_{20}u_{12})\ket{10} + (u_{10}u_{32} + u_{30}u_{12})\ket{11}.
\end{equation}

We now consider how to implement a two qubit gate in the computational subspace, using the circuit in Figure \ref{circuit_1}. Notice that the resulting states in \eqref{circuit_output} only depend on the values $u_{ij}$ with $0\leq i,j \leq 3$. With this in mind we define the matrix $\tilde{U}$ to be the upper left corner of the matrix $U$:
\begin{equation}
	\tilde{U} := \left( \begin{array}{cccc}
		u_{00} & u_{01} & u_{02} & u_{03} \\
		u_{10} & u_{11} & u_{12} & u_{13} \\
		u_{20} & u_{21} & u_{22} & u_{23} \\
		u_{30} & u_{31} & u_{32} & u_{33}
	\end{array} \right)
\end{equation}
and we define a matrix-valued function, $f$, such that
\begin{equation} \label{function_def}
	f(\tilde{U}) := \left( \begin{array}{cccc}
		u_{00}u_{22} + u_{20}u_{02} & u_{00}u_{23} + u_{20}u_{03} & u_{01}u_{22} + u_{21}u_{02} & u_{01}u_{23} + u_{21}u_{03} \\
		u_{00}u_{32} + u_{30}u_{02} & u_{00}u_{33} + u_{30}u_{03} & u_{01}u_{32} + u_{31}u_{02} & u_{01}u_{33} + u_{31}u_{03} \\
		u_{10}u_{22} + u_{20}u_{12} & u_{10}u_{23} + u_{20}u_{13} & u_{11}u_{22} + u_{21}u_{12} & u_{11}u_{23} + u_{21}u_{13} \\
		u_{10}u_{32} + u_{30}u_{12} & u_{10}u_{33} + u_{30}u_{13} & u_{11}u_{32} + u_{31}u_{12} & u_{11}u_{33} + u_{31}u_{13}
	\end{array} \right)
\end{equation}
The idea is that $f(\tilde{U})$ is the transformation induced on the computational subspace by the circuit in Figure \ref{circuit_1}. More precisely, suppose that we wish to implement the unitary matrix, $W$, in the computational subspace, with a probability of success, $p$. Let $W$ take the form
\begin{equation}
	W= \left( \begin{array}{cccc}
		w_{00} & w_{01} & w_{02} & w_{03} \\
		w_{10} & w_{11} & w_{12} & w_{13} \\
		w_{20} & w_{21} & w_{22} & w_{23} \\
		w_{30} & w_{31} & w_{32} & w_{33}
	\end{array} \right)
\end{equation}
Then we need to find $\tilde{U}$ such that we have
\begin{equation} \label{matrix_equation}
	\sqrt{p} W = f(\tilde{U})
\end{equation}
subject to the constraint that the matrix $\tilde{U}$ forms the upper left corner of a unitary matrix. Notice that if we have $\tilde{U}$ such that $f(\tilde{U})=\sqrt{p}W$ then $f(p^{-\frac14} \tilde{U}) = W$, and so all solutions of \eqref{matrix_equation} are a constant multiple of solutions of the equation
\begin{equation} \label{simple_equation}
	f(\tilde{U}) = W.
\end{equation}
Furthermore, it is known \cite{Kie08} that the matrix $\tilde{U}$ can be written as the upper left corner of a unitary matrix if and only if its singular values are at most $1$. Write $s_1(M)$ for the largest singular value of a matrix $M$. Suppose we are given an arbitrary matrix $\tilde{U}$ which is a solution to \eqref{simple_equation}. Then, either $s_1(\tilde{U})\leq 1$ and $\tilde{U}$ is a solution to \eqref{matrix_equation} with $p=1$, or the matrix $s_1(\tilde{U})^{-1} \tilde{U}$ is a solution to \eqref{matrix_equation} with $p=s_1(\tilde{U})^{-\frac14}$. Consequently, when we are only interested in the existence of solutions to \eqref{matrix_equation} \emph{for any value of $p$}, we need only consider the existence of solutions to \eqref{simple_equation}.

\subsection*{Invariance under local unitaries}

In this section we note that if \eqref{simple_equation} has a solution for a given $W$, then it also has a solution for any matrix of the form $W':=(V_1\otimes V_2)W(V_3\otimes V_4)$ where $V_1, V_2, V_3, V_4$ are $2\times 2$ unitary matrices. (We say that a matrix $W'$ of this form is \emph{locally equivalent} to $W$).

The reason for this is as follows. Let $X$ and $Y$ be the block matrices
\begin{equation}
\begin{aligned}
	X:= \left( \begin{array}{cc} V_1 & 0 \\ 0 & V_2 \end{array} \right) \\
	Y:= \left( \begin{array}{cc} V_3 & 0 \\ 0 & V_4 \end{array} \right)
\end{aligned}
\end{equation}
Then the following relation holds:
\begin{equation} \label{invariance_equation}
	f(X\tilde{U} Y) = (V_1 \otimes V_2) f(\tilde{U}) (V_3 \otimes V_4)
\end{equation}
From this it is clear that if there exists $\tilde{U}$ such that $f(\tilde{U})=W$ then there exists also $\tilde{U}' = X\tilde{U} Y$ such that $f(\tilde{U}') = W'$.

\begin{figure}[t]
	\includegraphics[width=0.5\textwidth]{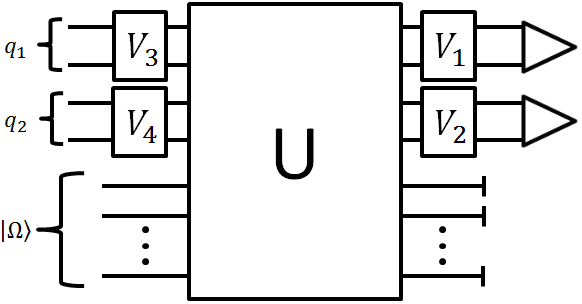}
	\caption{The modified circuit. This circuit is identical to Figure \ref{circuit_1}, except for the addition of the components $V_1,\ldots,V_4$. Each of these components is a series of beam splitters and phase shifters, and acts exclusively on one of the logical qubits. The effect of these components is to perform local unitaries on the computational subspace, before and after performing the circuit $U$.}
	\label{circuit_2}
\end{figure}

Equation \eqref{invariance_equation} can be verified in two ways. First, we will give a physically motivated argument. Consider a circuit of the form shown in Figure \ref{circuit_2}, where the optical component $U$ is such that it implements a unitary $W$ on the computational subspace with probability $p$ (i.e. $f(\tilde{U})=\sqrt{p}W$). Now suppose that this circuit is applied to a state $\ket{\psi}$ in the computational subspace. The first local unitaries will map the state to $(V_3\otimes V_4)\ket{\psi}$. Then the component $U$ will map it to $\sqrt{p}W(V_3\otimes V_4)\ket{\psi} + \sqrt{1-p} \ket{\alpha^\perp}$ where $\ket{\alpha^\perp}$ is a state orthogonal to the computational subspace. The final local unitaries will map this to $\sqrt{p}(V_1\otimes V_2) W (V_3\otimes V_4) \ket{\psi} + \sqrt{1-p}\ket{\beta^\perp}$. Overall, the circuit performs the unitary $XUY$ (on modes). Therefore, we conclude that $f(X\tilde{U}Y)= \sqrt{p} (V_1\otimes V_2) W (V_3\otimes V_4)$.

For a more mathematical proof, notice that if we write $\tilde{U}$ as a block matrix:
\begin{equation}
	\tilde{U} = \left( \begin{array}{cc} A & B \\ C & D \end{array} \right)
\end{equation}
then we have
\begin{equation}
	f(\tilde{U}) = A\otimes D + (B\otimes C) S
\end{equation}
where $S$ is the swap operator given by
\begin{equation}
	S:= \left(\begin{array}{cccc} 1&0&0&0 \\ 0&0&1&0 \\ 0&1&0&0 \\ 0&0&0&1 \end{array}\right)
\end{equation}
Consequently,
\begin{equation}
\begin{aligned}
	f(X\tilde{U}Y) &= f\left( \left(\begin{array}{cc} V_1AV_3 & V_1BV_4 \\ V_2CV_3 & V_2DV_4 \end{array}\right) \right) \\
	&= (V_1\otimes V_2)(A\otimes D)(V_3\otimes V_4) + (V_1\otimes V_2)(B\otimes C)(V_4\otimes V_3) S \\
	&= (V_1\otimes V_2)(A\otimes D)(V_3\otimes V_4) + (V_1\otimes V_2)(B\otimes C) S(V_3\otimes V_4) \\
	&= (V_1 \otimes V_2) f(\tilde{U}) (V_3 \otimes V_4)
\end{aligned}
\end{equation}
For the third equality here we used the relation $(Q\otimes P)S = S(P\otimes Q)$ which holds for all $2\times 2$ matrices $P$ and $Q$.

\section{The main result} \label{result}

In this section we will present our main result. Let us begin from a well known decomposition of two qubit gates.

\begin{lemma}[{\cite[Appendix A]{KC01}}] \label{cartan_decomp}
	Let $W$ be a $4\times 4$ unitary matrix. Then there exist $\alpha, \beta, \gamma \in [0,2\pi]$ and $2 \times 2$ unitary matrices $V_1, V_2, V_3, V_4$ such that
	\begin{equation}
		W = (V_1\otimes V_2) \exp \left[ i\alpha X \otimes X + i\beta Y\otimes Y + i\gamma Z\otimes Z \right] (V_3\otimes V_4)
	\end{equation}
	where $X,Y$ and $Z$ are the Pauli matrices:
	\begin{equation}
		X=\twomatrix{0&1\\1&0}, Y=\twomatrix{0&-i\\i&0}, Z=\twomatrix{1&0\\0&-1}.
	\end{equation}
\end{lemma}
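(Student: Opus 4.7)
The plan is to pass to the so-called \emph{magic basis}---the basis of (phased) Bell states---and reduce the claim to the Autonne--Takagi diagonalisation of a symmetric unitary. Let $M \in SU(4)$ be the unitary whose columns are $\ket{\Phi^+}$, $i\ket{\Phi^-}$, $i\ket{\Psi^+}$, $\ket{\Psi^-}$. Two properties of $M$ drive the argument: (i) conjugation by $M$ realises the double-cover isomorphism $SU(2)\otimes SU(2)/\{\pm I\} \cong SO(4)$, so $M^\dagger (V_1 \otimes V_2) M \in SO(4)$ for every $V_1, V_2 \in SU(2)$, and every element of $SO(4)$ arises this way (an overall phase is picked up if we allow $V_k \in U(2)$); (ii) $X\otimes X$, $Y\otimes Y$ and $Z\otimes Z$ are simultaneously diagonalised by $M$ with $\pm 1$ eigenvalues, so $\exp\!\bigl[i\alpha X\otimes X + i\beta Y\otimes Y + i\gamma Z\otimes Z\bigr]$ becomes a diagonal unitary whose four phases are of the form $\epsilon_1\alpha + \epsilon_2\beta + \epsilon_3\gamma$ with $\epsilon_1\epsilon_2\epsilon_3 = -1$. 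Both facts are standard and can be checked by direct calculation.

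After absorbing an overall phase into one of the local factors, it suffices to treat $W \in SU(4)$; set $\tilde W := M^\dagger W M$. The goal reduces to writing $\tilde W = O_1 D O_2$ with $O_1, O_2 \in SO(4)$ and a diagonal unitary $D$. The key step is to form $A := \tilde W^{T} \tilde W$, which is both symmetric and unitary. Decomposing $A = R + iS$ with $R, S$ real symmetric and using $A A^\dagger = I$ shows $[R,S]=0$ and $R^2 + S^2 = I$, so $R$ and $S$ admit a common real orthogonal eigenbasis $O$; hence $A = O^{T} \Lambda O$ with $\Lambda$ diagonal unitary. Choose a diagonal square root $D$ of $\Lambda$ and set $O_1 := \tilde W O^{T} D^{-1}$, $O_2 := O$. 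A short computation gives $O_1^T O_1 = D^{-1}\Lambda D^{-1} = I$ and $O_1 O_1^\dagger = I$, and together these force $\bar O_1 = O_1$, so $O_1$ is real orthogonal. Flipping a sign in one row of $O_2$ and in the corresponding entry of $D$ if necessary, one arranges $O_1, O_2 \in SO(4)$.

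Finally, the linear map sending $(\delta, \alpha, \beta, \gamma)$ to the four phases of $D$ via $\delta + \epsilon_1\alpha + \epsilon_2\beta + \epsilon_3\gamma$ (with $\epsilon_1\epsilon_2\epsilon_3 = -1$) is given by an invertible $4\times 4$ sign matrix, so it can be inverted to produce $\alpha, \beta, \gamma \in [0, 2\pi]$ together with a global phase $\delta$, which is then absorbed into one of the $V_k$. Transporting $O_1, D, O_2$ back through $M$ using properties (i) and (ii) yields the stated decomposition. The main obstacle is the Autonne--Takagi step together with the determinant bookkeeping needed to land in $SO(4)$ rather than $O(4)$; once the magic-basis dictionary is in place, the rest is a routine change of basis.
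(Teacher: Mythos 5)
The paper offers no proof of this lemma, deferring entirely to \cite[Appendix A]{KC01}, and your magic-basis argument---conjugating by the Bell-state change of basis so that local unitaries become elements of $SO(4)$ and the interaction exponential becomes diagonal, then performing the Autonne--Takagi-style diagonalisation of the symmetric unitary $\tilde W^T \tilde W$---is precisely the standard proof given in that reference. The argument is correct as written, including the determinant bookkeeping (the sign of $\det O_1$ is adjustable through the choice of square roots in $D$, independently of the row flip that fixes $\det O_2$) and the invertibility of the $4\times 4$ sign matrix relating $(\delta,\alpha,\beta,\gamma)$ to the four phases of $D$.
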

\begin{rmk}
	This decomposition is not unique. The matrix can be written in this form for multiple triples $(\alpha,\beta,\gamma)$.
\end{rmk}

In the previous section we showed that the achievability of a gate under our scheme is invariant under local unitaries. Consequently, Lemma \ref{cartan_decomp} tells us that we need only consider unitaries of the form $\exp \left[ i\alpha X \otimes X + i\beta Y\otimes Y + i\gamma Z\otimes Z \right]$ in order to develop a complete picture. Written in matrix form we have
\begin{equation} \label{cartan_form}
\begin{aligned}
	e^{i\alpha X \otimes X + i\beta Y\otimes Y + i\gamma Z\otimes Z} &= 
	\left(\begin{array}{cccc} 
		e^{i\gamma}\cos(\alpha-\beta) & 0 & 0 & ie^{i\gamma}\sin(\alpha-\beta) \\ 
		0 & e^{-i\gamma}\cos(\alpha+\beta) & ie^{-i\gamma}\sin(\alpha+\beta) & 0 \\ 
		0 & ie^{-i\gamma}\sin(\alpha+\beta) & e^{-i\gamma}\cos(\alpha+\beta) & 0 \\ 
		ie^{i\gamma}\sin(\alpha-\beta) & 0 & 0 & e^{i\gamma}\cos(\alpha-\beta) 
	\end{array}\right) \\
	&=\left( \begin{array}{cccc} w_1&0&0&w_4 \\ 0&w_2&w_3&0 \\ 0&w_3&w_2&0 \\ w_4&0&0&w_1 \end{array} \right)
\end{aligned}
\end{equation}
where $w_1=e^{i\gamma}\cos(\alpha-\beta)$, $w_2=e^{-i\gamma}\cos(\alpha+\beta)$, $w_3=ie^{-i\gamma}\sin(\alpha+\beta)$ and $w_4=ie^{i\gamma}\sin(\alpha-\beta)$. We are interested in solving equation \eqref{simple_equation} for this choice of $W$.

\begin{thm} \label{main_theorem}
	Let $W$ be a $4\times 4$ unitary matrix, which is locally equivalent to a matrix of the form \eqref{cartan_form}. Then $W$ can be achieved by the scheme in Figure \ref{circuit_1} if and only if at least one of the six values $\alpha\pm\beta, \alpha\pm\gamma, \beta\pm\gamma$ is equal to $0$ or $\frac{\pi}{2}$ modulo $\pi$.
\end{thm}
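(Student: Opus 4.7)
The plan is to prove sufficiency and necessity separately. For sufficiency, I would exploit the symmetry of the Cartan form under local unitaries: conjugating by tensor products of single-qubit Cliffords (e.g.\ $H\otimes H$ swaps $X$ and $Z$ in the exponent) permutes the triple $(\alpha,\beta,\gamma)$, and the induced permutation maps the six values $\alpha\pm\beta$, $\alpha\pm\gamma$, $\beta\pm\gamma$ among themselves (signs being irrelevant modulo $\pi$). By the local-unitary invariance \eqref{invariance_equation}, it therefore suffices to exhibit explicit solutions of \eqref{simple_equation} in the two cases $\alpha-\beta\equiv 0$ and $\alpha-\beta\equiv \pi/2\pmod \pi$, i.e.\ when $w_4$ or $w_1$ vanishes in \eqref{cartan_form}. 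In each such case $W$ becomes block-diagonal in the even/odd-parity basis with one scalar block, and one can construct a suitable $\tilde U$ by a block ansatz in which one of $B,C$ has rank one; I would expect this to recover the known $\tfrac19$ CZ construction as a special case.

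For necessity, writing $\tilde U$ as the $2\times 2$ block matrix $\left(\begin{array}{cc}A & B\\ C & D\end{array}\right)$ and setting $a^i := A\ket{i}$ (and analogously for $B,C,D$), the identity $f(\tilde U) = W$ unrolls into four tensor equations in $\cpx^2\otimes\cpx^2$:
\begin{align*}
a^0\otimes d^0 + b^0\otimes c^0 &= w_1\ket{00} + w_4\ket{11},\\
a^0\otimes d^1 + b^1\otimes c^0 &= w_2\ket{01} + w_3\ket{10},\\
a^1\otimes d^0 + b^0\otimes c^1 &= w_3\ket{01} + w_2\ket{10},\\
a^1\otimes d^1 + b^1\otimes c^1 &= w_4\ket{00} + w_1\ket{11}.
\end{align*}
Passing to the Bell basis via sums and differences of these equations, the auxiliary matrix $M := A\otimes D + B\otimes C$ acquires $\ket{\Phi^+},\ket{\Phi^-},\ket{\Psi^+}$ as eigenvectors with eigenvalues $w_1+w_4$, $w_1-w_4$, $w_2+w_3$, while $A\otimes D - B\otimes C$ has $\ket{\Psi^-}$ as an eigenvector with eigenvalue $w_2-w_3$. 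In particular $M$ preserves the three-dimensional symmetric subspace and is diagonal there in the Bell basis, a strong structural constraint on $A,B,C,D$.

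The main obstacle is then to convert this spectral data into a polynomial identity in $w_1,w_2,w_3,w_4$ that forces at least one of the six conditions. My strategy would be to eliminate $A,B,C,D$ from the four tensor equations by exploiting that each right-hand side has tensor rank at most $2$, whose rank-$2$ decomposition is unique up to a $GL_2\times GL_2$ action; matching these decompositions consistently across the four equations should yield a polynomial relation among the $w_i$. Since the four eigenvalues of $W$ are the phases $e^{i(\pm\alpha\pm\beta\pm\gamma)}$ with an even number of minus signs, I expect this polynomial to factor cleanly into six pieces, one per value $\alpha\pm\beta$, $\alpha\pm\gamma$, $\beta\pm\gamma$, with the vanishing of each factor corresponding to that value being $0$ or $\pi/2\pmod\pi$. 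Carrying out this elimination without spurious factors, and handling the degenerate cases where one of the right-hand-side tensors drops rank (so that its decomposition fails to be unique), is the principal difficulty I foresee.
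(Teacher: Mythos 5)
Your sufficiency argument takes a genuinely different, and arguably cleaner, route than the paper. The paper obtains sufficiency in the generic case ($w_i\neq 0$ for all $i$) from the constructive half of a long variable-elimination (Lemma \ref{non-zero_case}), and treats the case $w_i=0$ separately by an explicit ansatz (Lemma \ref{zero_case}). You instead observe that local Clifford conjugations permute (and sign-flip) $(\alpha,\beta,\gamma)$, so that by the invariance \eqref{invariance_equation} any of the six conditions can be moved onto $\alpha\pm\beta$, i.e.\ onto the vanishing of some $w_i$ in \eqref{cartan_form}. This correctly reduces the entire ``if'' direction to the paper's easy Lemma \ref{zero_case}, and buys you a conceptual explanation of why only the $w_i=0$ case ever needs an explicit construction. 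What is missing is that construction itself: you only assert that ``a block ansatz in which one of $B,C$ has rank one'' should work. In the paper the working ansatz makes all four blocks anti-diagonal ($u_{00}=u_{11}=u_{02}=u_{13}=u_{20}=u_{31}=u_{22}=u_{33}=0$) and then solves a residual quadratic \eqref{u10_quadratic}, including a check that its roots avoid $0$ and $w_4$; some such verification is needed before sufficiency is established.

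The necessity direction, however, has a genuine gap, and it is where essentially all of the content of the theorem lives. Your Bell-basis observations are correct but carry no information beyond the original system: they restate that $W$ in the form \eqref{cartan_form} has the Bell states as eigenvectors, and that $A\otimes D+B\otimes C$ agrees with $W$ on the symmetric subspace; they do not constrain $\tilde U$ further. The proposed elimination then rests on the claim that the rank-$2$ decomposition of each right-hand side is ``unique up to a $GL_2\times GL_2$ action.'' For order-$2$ tensors (i.e.\ $2\times2$ matrices) this is not a uniqueness statement at all: writing $\operatorname{diag}(w_1,w_4)$ as $a^0\otimes d^0+b^0\otimes c^0$ leaves a full $GL(2,\cpx)$ of freedom (replace the pair $(a^0,b^0)$ by $(a^0,b^0)G$ and $(d^0,c^0)$ by $(d^0,c^0)G^{-T}$), so each of the four equations contributes four free complex parameters, and the entire difficulty is in intersecting these four families subject to the shared columns $a^0,a^1,b^0,b^1,c^0,c^1,d^0,d^1$. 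That intersection is exactly the computation the paper carries out in Lemma \ref{non-zero_case}: first showing no $u_{ij}$ can vanish, then forcing $\det M_1=0$, extracting kernel vectors, and eliminating down to the single scalar relation $b_2w_2=b_1b_4w_3-b_4w_4+w_1$, i.e.\ $w_1\pm w_2\pm w_3\pm w_4=0$, which is then translated into the $\alpha\pm\gamma$, $\beta\pm\gamma$ conditions. Your proposal conjectures the final answer (``I expect this polynomial to factor cleanly into six pieces'') but does not derive it, and explicitly defers the elimination and the degenerate rank-drop cases. As it stands the ``only if'' half is a plan, not a proof.
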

\begin{proof}
	According to lemmas \ref{non-zero_case} and \ref{zero_case} (see Appendix) $W$ can be achieved if and only if either
	\begin{equation} \label{signs_equation}
		w_1 \pm w_2 \pm w_3 \pm w_4 = 0
	\end{equation}
	for at least one of the eight possible choices for the signs, or
	\begin{equation}
		w_i = 0
	\end{equation}
	for some $i$.
	
	Let us consider the second case first. In order to have $w_i=0$ for some $i$, we must have one of $\cos(\alpha\pm\beta), \sin(\alpha\pm\beta)$ equal to zero. In other words, we must have one of $\alpha\pm\beta$ equal to $0$ or $\frac{\pi}{2}$ modulo $\pi$.
	
	Now consider the first case. Suppose that we have
	\begin{equation}
		w_1 + w_2 + w_3 + w_4 = 0
	\end{equation}
	This means that
	\begin{equation} \label{ppp_constraint}
		e^{i\gamma}\cos(\alpha-\beta) + e^{-i\gamma}\cos(\alpha+\beta) + ie^{-i\gamma}\sin(\alpha+\beta) + ie^{i\gamma}\sin(\alpha-\beta) = 0
	\end{equation}
	Let us write $c_\theta, s_\theta$ as a shorthand for $\cos\theta, \sin\theta$ in order to simplify notation. Then we can expand \eqref{ppp_constraint} to give
	\begin{equation}
		(c_\gamma+is_\gamma)(c_\alpha c_\beta + s_\alpha s_\beta) + (c_\gamma-is_\gamma)(c_\alpha c_\beta - s_\alpha s_\beta) + (ic_\gamma + s_\gamma)(s_\alpha c_\beta + c_\alpha s_\beta) + (ic_\gamma - s_\gamma)(s_\alpha c_\beta - c_\alpha s_\beta) = 0 
	\end{equation}
	Splitting this into real and imaginary parts we have
	\begin{equation}
	\begin{aligned}
		c_\gamma (c_\alpha c_\beta + s_\alpha s_\beta + c_\alpha c_\beta - s_\alpha s_\beta) + s_\gamma (s_\alpha c_\beta + c_\alpha s_\beta - s_\alpha c_\beta + c_\alpha s_\beta) &= 0 \\
		s_\gamma (c_\alpha c_\beta + s_\alpha s_\beta - c_\alpha c_\beta + s_\alpha s_\beta) + c_\gamma (s_\alpha c_\beta + c_\alpha s_\beta + s_\alpha c_\beta - c_\alpha s_\beta) &= 0
	\end{aligned}
	\end{equation}
	which implies
	\begin{equation}
	\begin{aligned}
		2c_\gamma c_\alpha c_\beta + 2s_\gamma c_\alpha s_\beta &= 0 \\
		2s_\gamma s_\alpha s_\beta + 2c_\gamma s_\alpha c_\beta &= 0
	\end{aligned}
	\end{equation}
	and hence
	\begin{equation}
	\begin{aligned}
		c_\alpha \cos(\beta-\gamma) &= 0 \\
		s_\alpha \cos(\beta-\gamma) &= 0
	\end{aligned}
	\end{equation}
	Both of these equations can be satisfied only when $\cos(\beta-\gamma)=0$, or equivalently, when $\beta-\gamma$ is equal to $\frac{\pi}{2}$ modulo $\pi$. In exactly the same way, if we expand the other seven choices for the signs in \eqref{signs_equation} then we obtain the conditions
	\begin{equation}
	\begin{aligned}
		\cos(\beta+\gamma) &= 0 \\
		\sin(\beta\pm\gamma) &= 0 \\
		\cos(\alpha\pm\gamma) &= 0 \\
		\sin(\alpha\pm\gamma) &= 0
	\end{aligned}
	\end{equation}
\end{proof}

\section{Optimal success probability} \label{success_prob}
The result of the previous section shows that the circuit in Figure \ref{circuit_1} cannot implement almost all two qubit gates, for any probability of success. However, it also implies that the set of gates which can be achieved has 15 independent real-valued parameters (as opposed to 16 for the $4\times 4$ unitary group). Therefore, there are many gates which can be implemented by this scheme, and, in fact, this set contains many important gates, including CNOT and all controlled phase gates.

This means that this set up still has value in an experimental setting, and raises another important question: for those gates which can be achieved, what is the maximum probability of success with which they succeed? Looking carefully at the proof of Lemma \ref{non-zero_case} we see that we actually found all solutions of equation \eqref{simple_equation} for those cases where a solution exists. The family of solutions is characterized by two free complex-valued parameters (and one free parameter which takes values $\pm1$).

This leads us to an algorithm for computing the optimal probability of success. Given a unitary $W$ which we wish to implement, convert it into the form \eqref{cartan_form} by applying local unitaries. There are many well known algorithms (for example \cite{BdV08}) which can accomplish this.

Now, suppose that we have $\tilde{U}$ which is a solution of \eqref{simple_equation}. Then, according to the comment below \eqref{simple_equation}, this gives us an implementation of the gate $W$ with success probability $s_1(\tilde{U})^{-\frac14}$. It is simple to write a function (call it $g$) which calculates this probability. Running a numerical optimization of $g$ over the entire family of solutions will result in finding the optimal success probability. Since we have an explicit characterization of the family of solutions, many standard numerical routines are suitable for this purpose. For example, we used the BFGS method (see \cite{NW99}), which is implemented in the optimize package of the SciPy library \cite{SciPy}. For the problem at hand, this method converges within seconds on a standard desktop computer, although it does not guarantee finding the best solution.

% Does it make any sense to cite BFGS, scipy if we haven't used them?
% What if W is of the lemma 4 form?
% We implemented this and here's a graph comparing for controlled phase gates

\section{Open questions} \label{open}
We have considered the problem of implementing two qubit gates under a contemporary scheme for experiments in linear optical quantum computation. Our results show that most such gates cannot be performed within this scheme, with any probability of success. This begs the question: why does this scheme support some gates, but not others? Is there a physical consideration which sets these gates apart? Is there some physical meaning to the necessary and sufficient condition given in Theorem \ref{main_theorem}? We leave this question open. Another obvious extension of this work would be to consider the case of three qubit gates and higher. Note that in this scenario our approach becomes very complicated. Indeed, we would then need to solve a system of 64 cubic equations in 64 unknowns.

\section*{Acknowledgments}
The author wishes to give special thanks to Enrique Mart\'in Lopez for initially drawing attention to the problem, insightful discussions throughout the project, and for checking the manuscript. I also acknowledge useful conversations with Noah Linden and (indirectly) with Anthony Laing. This work was supported by the U.K. EPSRC.

\newpage
\appendix \label{appendix}
\section{}
\begin{lemma} \label{non-zero_case}
	A unitary $W$ of the form \eqref{cartan_form} with $w_1,w_2,w_3,w_4 \neq0$ can be achieved by the scheme in Figure \ref{circuit_1} if and only if
	\begin{equation}
		w_1 \pm w_2 \pm w_3 \pm w_4 = 0
	\end{equation}
	for at least one of the eight possible choices for the signs ($\pm$).
\end{lemma}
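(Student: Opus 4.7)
The strategy is to exploit the identity $f(\tilde{U}) = A\otimes D + (B\otimes C)S$ (with $\tilde{U} = \begin{pmatrix} A & B \\ C & D \end{pmatrix}$) already derived in the paper, to rewrite $f(\tilde U) = W$ as a system of 16 quadratic equations. Eight of them equate entries to the $w_i$, and eight equate off-parity-block entries to zero. The key observation is that the 16 products $B_{il}C_{jk}$ can be arranged into a $4\times 4$ matrix $M$ whose entries are fully determined by the equations as explicit affine functions of the $a_i = A_{i'j'}$, $d_j = D_{i'j'}$ and the $w_k$; since $M$ is, by its definition, the outer product of the vectorised $B$ and $C$, it must have rank at most one, so every $2\times 2$ minor of $M$ must vanish.

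For necessity, each vanishing $2\times 2$ minor is a polynomial identity in the $a_i, d_j, w_k$. My plan is to extract the identities from the four ``parity-preserving'' minors (rows/columns of the form $\{1,2\},\{3,4\},\{1,3\},\{2,4\}$) and the two ``parity-exchanging'' ones ($\{1,2\}\times\{3,4\}$ and its transpose); these collapse to the four constraints $a_1 d_1 = a_4 d_4$, $a_2 d_2 = a_3 d_3$, $a_1 d_4 = a_4 d_1$ and $a_2 d_3 = a_3 d_2$. Outside degenerate branches these force $(a_4, d_4) = \epsilon_1(a_1, d_1)$ and $(a_3, d_3) = \epsilon_2(a_2, d_2)$ for signs $\epsilon_1, \epsilon_2 \in \{\pm 1\}$, and the remaining non-zero equations reduce to the $2\times 2$ linear system $w_4 p + w_1 q = w_1 w_4$, $\epsilon_1 w_3 p + \epsilon_2 w_2 q = w_2 w_3$ in the variables $p = a_1 d_1$, $q = a_2 d_2$. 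Substituting the generic solution into the remaining minor spanned by rows/columns $\{1,4\}$ simplifies, after cancellation, to $(w_2 - \epsilon_1 w_1)^2 = (\epsilon_2 w_4 - w_3)^2$, which is precisely one of the eight sign conditions $w_1 \pm w_2 \pm w_3 \pm w_4 = 0$.

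For sufficiency I would reverse the construction: given a specific sign condition, fix $\epsilon_1, \epsilon_2$ and the outer sign accordingly, solve the (now-consistent) reduced system for $p, q$, factor them as $p = a_1 d_1$ and $q = a_2 d_2$ in any way, and recover the entries of $B$ and $C$ from their specified pairwise products. The sign condition is precisely what makes the reduced system solvable, so every step can be reversed to produce an explicit $\tilde U$ with $f(\tilde U) = W$.

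The main obstacle is the degenerate-case analysis in the necessity argument: if some $a_i, d_j$ vanishes the proportionality step breaks down, and if the determinant $\epsilon_2 w_2 w_4 - \epsilon_1 w_1 w_3$ of the reduced linear system vanishes, the system is either inconsistent or underdetermined. In each such branch the surviving equations simplify drastically and force a sign condition by direct inspection; the assumption $w_1, w_2, w_3, w_4 \neq 0$ is essential to rule out the most pathological sub-cases, but the case analysis itself still requires some care.
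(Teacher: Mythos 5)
Your rank-one reformulation is a genuinely different route from the paper's (which works with the sixteen scalar equations directly, packages the eight zero equations as kernel conditions $M_1\mathbf{u}_1=M_2\mathbf{u}_2=0$ on two $4\times4$ matrices built from the bottom rows of $\tilde U$, and eliminates variables one at a time), and the generic branch of your necessity argument checks out: the minors you list really do collapse to $a_1d_1=a_4d_4$, $a_2d_2=a_3d_3$, $a_1d_4=a_4d_1$, $a_2d_3=a_3d_2$ once one divides by $w_1,w_4$ (resp.\ $w_2,w_3$), and substituting the solution of the $2\times2$ system into the $\{1,4\}\times\{1,4\}$ minor does give $(w_2-\epsilon_1w_1)^2=(\epsilon_2w_4-w_3)^2$. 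However, necessity is not finished. The degenerate branches you flag are not vacuous: $\epsilon_2w_2w_4=\epsilon_1w_1w_3$ is equivalent to $\sin 2\alpha=0$ or $\sin 2\beta=0$, which is perfectly compatible with $w_1,\dots,w_4\neq0$ (e.g.\ $\alpha=0$, $\beta=\pi/3$), so the singular-determinant case must actually be argued, and the proportionality step $(a_4,d_4)=\epsilon_1(a_1,d_1)$ is only valid once you know no $a_i$ or $d_j$ vanishes. The paper settles the latter with a short pigeonhole argument (if $u_{00}=0$ then the zero equation $u_{00}u_{23}+u_{20}u_{03}=0$ forces $u_{20}=0$ or $u_{03}=0$, contradicting $w_1\neq0$ or $w_2\neq0$); you need this, or a case-by-case substitute, before your main chain of implications applies to an arbitrary solution.

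The more serious gap is sufficiency. The conditions you derive use only a subset of the thirty-six $2\times2$ minors of $M$, but recovering $B$ and $C$ from the prescribed products $B_{il}C_{kj}$ requires \emph{all} of them to vanish, i.e.\ $M$ must genuinely have rank at most one. These extra minors are not automatic: for instance the minor on rows $\{1,4\}$ and columns $\{2,3\}$ equals $a_2d_1a_3d_4-a_1d_3a_4d_2=\epsilon_1\epsilon_2\bigl[(a_2d_1)^2-(a_1d_2)^2\bigr]$, so the factorisations $p=a_1d_1$ and $q=a_2d_2$ cannot be chosen ``in any way'' as you claim --- you must additionally arrange $a_1d_2=\pm a_2d_1$, i.e.\ pick $a_1/a_2$ to be a square root of $\pm p/q$, and then still verify the remaining minors. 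Until every minor is checked (or an explicit rank-one factorisation of $M$ is exhibited --- which is effectively what the paper does by writing down the kernel vectors of $M_1$ and $M_2$ explicitly and back-substituting, checking only that no division by zero occurs), the ``if'' direction of the lemma is not established.
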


\begin{proof}
	We are interested in finding solutions of the equation \eqref{simple_equation}. More explicitly, we are looking for solutions to
	\begin{equation}
		\left( \begin{array}{cccc}
		u_{00}u_{22} + u_{20}u_{02} & u_{00}u_{23} + u_{20}u_{03} & u_{01}u_{22} + u_{21}u_{02} & u_{01}u_{23} + u_{21}u_{03} \\
		u_{00}u_{32} + u_{30}u_{02} & u_{00}u_{33} + u_{30}u_{03} & u_{01}u_{32} + u_{31}u_{02} & u_{01}u_{33} + u_{31}u_{03} \\
		u_{10}u_{22} + u_{20}u_{12} & u_{10}u_{23} + u_{20}u_{13} & u_{11}u_{22} + u_{21}u_{12} & u_{11}u_{23} + u_{21}u_{13} \\
		u_{10}u_{32} + u_{30}u_{12} & u_{10}u_{33} + u_{30}u_{13} & u_{11}u_{32} + u_{31}u_{12} & u_{11}u_{33} + u_{31}u_{13}
		\end{array} \right)
		= \left( \begin{array}{cccc} w_1&0&0&w_4 \\ 0&w_2&w_3&0 \\ 0&w_3&w_2&0 \\ w_4&0&0&w_1 \end{array} \right)
	\end{equation}
	which is a system of 16 polynomial equations in the variables $u_{00},\ldots,u_{33}$.
	
	We will first consider the case in which $w_i\neq0$ for each $i$. This corresponds to the case in which $\alpha\pm\beta$ is not $0$ or $\frac{\pi}{2}$ modulo $\pi$.
	The first key observation is that in any solution, none of the $u_{ij}$ can be zero. For example, suppose that $u_{00}=0$. There are four equations containing $u_{00}$:
	\begin{align}
		u_{00}u_{22} + u_{20}u_{02} &= w_1 \label{u00one}\\
		u_{00}u_{23} + u_{20}u_{03} &= 0 \label{u00two}\\
		u_{00}u_{32} + u_{30}u_{02} &= 0 \label{u00three}\\
		u_{00}u_{33} + u_{30}u_{03} &= w_2 \label{u00four}
	\end{align}
	If $u_{00}=0$ then \eqref{u00two} implies that either $u_{20}$ or $u_{03}$ must also be zero. But if $u_{20}=0$ then \eqref{u00one} is false, and if $u_{03}=0$ then \eqref{u00four} is false. Consequently, any solution of this system of equations must have $u_{00}\neq0$. An identical argument shows that in fact we must have $u_{ij}\neq 0$ for all $0\leq i,j\leq 3$.

	The second key observation we make is that the 8 equations with 0 on the right-hand-side can be written as follows:
	\begin{align}
		\left(\begin{array}{cccc} u_{32}&0&u_{30}&0 \\ u_{23}&0&0&u_{20} \\ 0&u_{22}&u_{21}&0 \\ 0&u_{33}&0&u_{31} \end{array}\right) \left(\begin{array}{c} u_{00}\\u_{01}\\u_{02}\\u_{03} \end{array}\right) &= \left(\begin{array}{c} 0\\0\\0\\0 \end{array}\right) \\
		\left(\begin{array}{cccc} u_{22}&0&u_{20}&0 \\ u_{33}&0&0&u_{30} \\ 0&u_{32}&u_{31}&0 \\ 0&u_{23}&0&u_{21} \end{array}\right) \left(\begin{array}{c} u_{10}\\u_{11}\\u_{12}\\u_{13} \end{array}\right) &= \left(\begin{array}{c} 0\\0\\0\\0 \end{array}\right)
	\end{align}
	Let us write these equations as
	\begin{align}
		M_1 \bvec{u_1} &= \bvec{0} \\
		M_2 \bvec{u_2} &= \bvec{0}
	\end{align}
	Now, in order for these equations to have non-zero solutions for $\bvec{u_1}$ and $\bvec{u_2}$ we require $M_1$ and $M_2$ to be singular matrices. Thus we must have
	\begin{equation} \label{det_constraint}
		\det(M_1)=u_{22}u_{23}u_{30}u_{31} - u_{20}u_{21}u_{32}u_{33} = 0.
	\end{equation}
	The condition $\det(M_2)=0$ yields the same constraint.
	
	Furthermore, we can also conclude that the vector $\bvec{u_1}$ is a non-zero element of the kernel of $M_1$. Making use of the constraint \eqref{det_constraint} and applying Gaussian elimination, we find that the reduced row echelon form of $M_1$ is
	\begin{equation}
		\left(\begin{array}{cccc}
			1 & 0 & 0 & \frac{u_{20}}{u_{23}} \\
			0 & 1 & 0 & \frac{u_{31}}{u_{33}} \\
			0 & 0 & 1 & -\frac{u_{20}u_{32}}{u_{23}u_{30}} \\
			0 & 0 & 0 & 0
		\end{array}\right)
	\end{equation}
	From this we can conclude that for some non-zero $\lambda\in\cpx$ we have
	\begin{equation}
		\bvec{u_1} = \left(\begin{array}{c} u_{00}\\u_{01}\\u_{02}\\u_{03} \end{array}\right)
		= \lambda \left(\begin{array}{c} -\frac{u_{20}}{u_{23}} \\ -\frac{u_{31}}{u_{33}} \\ \frac{u_{20}u_{32}}{u_{23}u_{30}} \\ 1 \end{array}\right)
	\end{equation}
	By an identical argument applied to $M_2$, we can conclude also that for some non-zero $\mu\in\cpx$ we have
	\begin{equation}
		\bvec{u_2} = \left(\begin{array}{c} u_{10}\\u_{11}\\u_{12}\\u_{13} \end{array}\right)
		= \mu \left(\begin{array}{c} -\frac{u_{30}}{u_{33}} \\ -\frac{u_{21}}{u_{23}} \\ \frac{u_{22}u_{30}}{u_{20}u_{33}} \\ 1 \end{array}\right)
	\end{equation}
	
	We have now reduced our original system of 16 equations in 16 variables to a system of 9 equations in 10 variables ($\lambda,\mu,u_{20},\ldots,u_{33}$). Eight of the remaining equations are those corresponding to the non-zero matrix elements of $W$, and they too can be expressed in terms of $M_1$ and $M_2$:
	\begin{align}
		M_2 \bvec{u_1} &= \left(\begin{array}{c} w_1\\w_2\\w_3\\w_4 \end{array}\right) \\
		M_1 \bvec{u_2} &= \left(\begin{array}{c} w_4\\w_3\\w_2\\w_1 \end{array}\right)
	\end{align}
	(The other equation is the constraint \eqref{det_constraint}).
	Expanding, and substituting the expressions obtained for $\bvec{u_1}$ and $\bvec{u_2}$ above we get
	\begin{align}
		\lambda \left(\begin{array}{c} -\frac{u_{20}u_{22}}{u_{23}} + \frac{u_{20}^2u_{32}}{u_{23}u_{30}} \\ -\frac{u_{20}u_{33}}{u_{23}} + u_{30} \\ -\frac{u_{31}u_{32}}{u_{33}} + \frac{u_{20}u_{31}u_{32}}{u_{23}u_{30}} \\ -\frac{u_{23}u_{31}}{u_{33}} + u_{21} \end{array}\right) &= \left(\begin{array}{c} w_1\\w_2\\w_3\\w_4 \end{array}\right) \\
		\mu \left(\begin{array}{c} -\frac{u_{30}u_{32}}{u_{33}} + \frac{u_{22}u_{30}^2}{u_{20}u_{33}} \\ -\frac{u_{23}u_{30}}{u_{33}} + u_{20} \\ -\frac{u_{21}u_{22}}{u_{23}} + \frac{u_{21}u_{22}u_{30}}{u_{20}u_{33}} \\ -\frac{u_{21}u_{33}}{u_{23}} + u_{31} \end{array}\right) &= \left(\begin{array}{c} w_4\\w_3\\w_2\\w_1 \end{array}\right)
	\end{align}
	Here we have obtained 2 distinct expressions for each of $w_1,w_2,w_3$ and $w_4$. Equating the two expressions for $w_3$ gives
	\begin{equation}
		\lambda \left( -\frac{u_{31}u_{32}}{u_{33}} + \frac{u_{20}u_{31}u_{32}}{u_{23}u_{30}} \right) = \mu \left( -\frac{u_{23}u_{30}}{u_{33}} + u_{20} \right)
	\end{equation}
	This implies
	\begin{equation}
		\lambda \frac{u_{31}u_{32}}{u_{23}u_{30}u_{33}} \left( -u_{23}u_{30} + u_{20}u_{33} \right)
		= \mu \frac{1}{u_{33}} \left( -u_{23}u_{30} + u_{20}u_{33} \right)
	\end{equation}
	and thus
	\begin{equation} \label{w3_constraint}
		\frac{\lambda}{\mu} = \frac{u_{23}u_{30}}{u_{31}u_{32}}
	\end{equation}
	Similarly, equating the 2 expressions for $w_4$, and making use of \eqref{det_constraint}, gives
	\begin{equation}
		\frac{\lambda}{\mu} = \frac{u_{30}u_{32}}{u_{23}u_{31}}
	\end{equation}
	which, combined with \eqref{w3_constraint} gives
	\begin{equation} \label{w4_constraint}
		u_{23}^2 = u_{32}^2
	\end{equation}
	In the same way, equating the expressions for $w_1$ and $w_2$ leads to
	\begin{align}
		u_{20}^2 &= u_{31}^2 \\
		u_{22}^2 &= u_{33}^2
	\end{align}
	
	We could now summarize our progress, by restating the problem in the following way
	\begin{equation}
	\begin{aligned}
		&\text{find} & & u_{20},u_{21},u_{22},u_{23},u_{30},u_{31},u_{32},u_{33},\lambda,\mu \\
		&\text{subject to}
		& & \mu\left(-\frac{u_{21}u_{33}}{u_{23}} + u_{31}\right) = w_1 \\
		&&& \lambda\left(-\frac{u_{20}u_{33}}{u_{23}} + u_{30}\right)= w_2 \\
		&&& \mu\left(-\frac{u_{23}u_{30}}{u_{33}} + u_{20}\right) = w_3 \\
		&&& \lambda\left(-\frac{u_{23}u_{31}}{u_{33}} + u_{21}\right)= w_4 \\
		&&& u_{23}^2 = u_{32}^2 \\
		&&& u_{22}^2 = u_{33}^2 \\
		&&& u_{20}^2 = u_{31}^2 \\
		&&& \frac{\lambda}{\mu} = \frac{u_{30}u_{32}}{u_{23}u_{31}} \\
		&&& u_{22}u_{23}u_{30}u_{31} = u_{20}u_{21}u_{32}u_{33}
	\end{aligned}
	\end{equation}
	We will attack this problem via a series of substitutions. First, we introduce a new variable $\alpha$ and eliminate $u_{33}$ by setting
	\begin{equation}
		u_{33} = \alpha u_{23}
	\end{equation}
	This will simplify the notation somewhat. Next we rearrange the first and second constraints to eliminate $u_{21}$ and $u_{20}$
	\begin{align}
		u_{21} &= \alpha^{-1} (u_{31}-\mu^{-1}w_1) \label{u21_def} \\
		u_{20} &= \alpha^{-1} (u_{30}-\lambda^{-1}w_2) \label{u20_def}
	\end{align}
	Substituting these expressions into the third and fourth constraints gives us
	\begin{align}
		-\mu w_2 &= \alpha\lambda w_3 \\
		-\lambda w_1 &= \alpha\mu w_4
	\end{align}
	Now we use the fifth constraint to eliminate $u_{32}$
	\begin{equation}
		u_{32} = b_1 u_{23}
	\end{equation}
	where we have introduced a new variable $b_1$ which can only take the values $\pm 1$. Similarly, we can use the sixth constraint to eliminate $u_{22}$:
	\begin{equation}
		u_{22} = b_2 \alpha u_{23}
	\end{equation}
	Finally, substituting the above into the final three constraints gives
	\begin{align}
		\alpha^2 u_{31}^2 &= (u_{30}-\lambda^{-1}w_2)^2 \\
		\frac{\lambda}{\mu} &= b_1 \frac{u_{30}}{u_{31}} \\
		b_1 b_2 \alpha^2 u_{30}u_{31} &= (u_{30}-\lambda^{-1}w_2)(u_{31}-\mu^{-1}w_1) \label{u20u21notzero}
	\end{align}
	
	We have now reduced the problem to the following
	\begin{equation}
	\begin{aligned}
		&\text{find} & & u_{23},u_{30},u_{31},\alpha,\lambda,\mu\in\cpx, b_1,b_2\in\{-1,1\} \\
		&\text{subject to}
		& & -\mu w_2 = \alpha\lambda w_3 \\
		&&& -\lambda w_1 = \alpha\mu w_4 \\
		&&& \alpha^2 u_{31}^2 = (u_{30}-\lambda^{-1}w_2)^2 \\
		&&& \frac{\lambda}{\mu} = b_1 \frac{u_{30}}{u_{31}} \\
		&&& b_1 b_2 \alpha^2 u_{30}u_{31} = (u_{30}-\lambda^{-1}w_2)(u_{31}-\mu^{-1}w_1)
	\end{aligned}
	\end{equation}
	Notice that the variable $u_{23}$ does not appear in any of the constraints, so it can essentially take any value. To solve this system we continue to eliminate variables. First, using the second constraint we eliminate $\alpha$
	\begin{equation}
		\alpha = -\frac{\lambda w_1}{\mu w_4}
	\end{equation}
	Then, using the fourth constraint we eliminate $\mu$
	\begin{equation}
		\mu = b_1 \lambda \frac{u_{31}}{u_{30}}
	\end{equation}
	Substituting these expressions into the first constraint gives
	\begin{equation}
		-w_2 = \left(-\frac{\lambda w_1}{\mu w_4}\right)\left(\frac{\lambda}{\mu}\right)w_3 = -\frac{w_1w_3}{w_4} \frac{u_{30}^2}{u_{31}^2}
	\end{equation}
	which allows us to eliminate $u_{31}$ with the introduction of a new variable $b_3\in\{-1,1\}$
	\begin{equation}
		u_{31} = b_3 \frac{w_1^\frac12 w_3^\frac12}{w_2^\frac12 w_4^\frac12} u_{30}
	\end{equation}
	We now note that
	\begin{equation}
		\alpha^2 u_{31}^2 = \frac{w_1^2}{w_4^2} \left(\frac{\lambda}{\mu}\right)^2 u_{31}^2 = \frac{w_1^2}{w_4^2} u_{30}^2
	\end{equation}
	and so the third constraint reads
	\begin{equation}\label{lambda_def}
	\begin{aligned}
		\frac{w_1^2}{w_4^2}u_{30}^2 &= \left( u_{30} - \lambda^{-1}w_2 \right)^2 \\
		\implies \lambda &= \frac{w_2}{u_{30}} \left( 1-b_4\frac{w_1}{w_4} \right)^{-1}
	\end{aligned}
	\end{equation}
	where $b_4$ is another new variable which takes the values $\pm1$. We are now left with only one complex variable, and one constraint. Before we deal with this constraint, note the following
	\begin{equation}
		\alpha^2 = \left(\frac\lambda\mu\right)^2\left(\frac{w_1^2}{w_4^2}\right) = \left(\frac{u_{30}^2}{u_{31}^2}\right) \left(\frac{w_1^2}{w_4^2}\right) = \left(\frac{w_2w_4}{w_1w_3}\right)\left(\frac{w_1^2}{w_4^2}\right) = \frac{w_1w_2}{w_3w_4}
	\end{equation}
	and
	\begin{equation}
		\mu u_{31} = b_1 \lambda \frac{u_{31}^2}{u_{30}} = b_1 w_2 \left(1-b_4\frac{w_1}{w_4}\right)^{-1} \frac{u_{31}^2}{u_{30}^2} = b_1 \frac{w_1w_3}{w_4} \left(1-b_4\frac{w_1}{w_4}\right)^{-1}
	\end{equation}
	The final constraint now reads
	\begin{equation}
	\begin{aligned}
		b_1b_2\left(\frac{w_1w_2}{w_3w_4}\right) u_{30}u_{31} &= \left(b_4 \frac{w_1}{w_4} u_{30} \right) \left(1-\frac{w_1}{\mu u_{31}}\right) u_{31} \\
		\implies b_1b_2\left(\frac{w_1w_2}{w_3w_4}\right) &= b_4\frac{w_1}{w_4} \left( 1-b_1\frac{w_4}{w_3}\left(1-b_4\frac{w_1}{w_4}\right)\right) \\
		\implies b_1b_2\frac{w_1w_2}{w_3w_4} &= b_4\frac{w_1}{w_4} - b_1b_4\frac{w_1}{w_3} + b_1\frac{w_1^2}{w_3w_4} \\
		\implies b_2w_2 &= b_1b_4 w_3 - b_4w_4 + w_1
	\end{aligned}
	\end{equation}
	Now we find that something remarkable has happened. Not only has the variable $u_{30}$ cancelled from this constraint, leaving it as another free variable, but also we are left with a constraint solely in terms of the $w_i$ and the signs $b_1,b_2,b_4$. When can this constraint be satisfied? Notice that the freedom we have in choosing $b_1,b_2,b_4$ allows us to choose, independently, whichever sign we wish ($\pm1$) in front of each of $w_2, w_3$ and $w_4$. Therefore, this constraint can be satisfied only when
	\begin{equation} \label{w_constraint}
		w_1 \pm w_2 \pm w_3 \pm w_4 = 0
	\end{equation}
	for some choice of signs.
	
	Moreover, if this constraint can be satisfied, then the original system of equations has a solution. To check this we need only substitute backwards our freely chosen values for $u_{23}, u_{30}$ and $b_3$. A problem can only occur where we encounter a division by zero (all the other operations we performed were reversible). Where could such a problem occur?
	\begin{itemize}
		\item If we tried setting $u_{23}=0$ or $u_{30}=0$ we would certainly encounter problems, since we have already remarked that solutions do not exist in this case.
		\item In defining $\lambda$ (equation \eqref{lambda_def}) we require that $1-b_4\frac{w_1}{w_4}$ is not zero. In fact this is never a problem. Looking at our original definitions of $w_1$ and $w_4$ we see that $\frac{w_1}{w_4} = -i\cot (\alpha-\beta)$ which is purely imaginary.
		\item $u_{23},u_{30}\neq0$ then implies that $\lambda,u_{31},\mu,\alpha,u_{22},u_{32},u_{33}$ are all trivially non-zero. In defining $u_{20}$ and $u_{21}$ (equations \eqref{u21_def} and \eqref{u20_def}) we require that they are not zero. This also is not a problem. Equation \eqref{u20u21notzero} has non-zero left-hand side, and hence neither of the terms on the right-hand side can be zero.
	\end{itemize}
	We conclude that any choice of $u_{23},u_{30}\neq0$ will lead to solutions of the original problem. Therefore, the unitary $W$ can be implemented if and only if the constraint \eqref{w_constraint} can be satisfied.
\end{proof}

\begin{lemma} \label{zero_case}
	A unitary $W$ of the form \eqref{cartan_form} with $w_i=0$ for some $i$ can always be achieved by the scheme in Figure \ref{circuit_1}.
\end{lemma}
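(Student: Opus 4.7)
The plan is to establish the lemma by an explicit construction, case by case. Since $W$ has the form \eqref{cartan_form} and is unitary, $|w_1|^2+|w_4|^2=|w_2|^2+|w_3|^2=1$, so $w_1=w_4=0$ and $w_2=w_3=0$ are impossible; the hypothesis ``some $w_i=0$'' therefore splits into four cases according to which single $w_i$ vanishes, with the paired $w_j$ forced to have unit modulus. In each of the four cases $W$ is sparse: block-diagonal (if $w_3=0$ or $w_4=0$) or block-antidiagonal (if $w_1=0$ or $w_2=0$) with respect to the splitting $\{|00\rangle,|11\rangle\}\oplus\{|01\rangle,|10\rangle\}$.

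To attack each case, I would use the decomposition $f(\tilde U)=A\otimes D+(B\otimes C)S$ derived in Section~\ref{problem}. The key idea is that by choosing each of the four $2\times 2$ blocks $A,B,C,D$ of $\tilde U$ to be either diagonal or antidiagonal, one can arrange the zero-pattern of $f(\tilde U)$ to match that of $W$ exactly. Specifically, take all four blocks diagonal when $w_4=0$ (physically: two independent beam splitters, one on modes $(0,2)$ and one on modes $(1,3)$); $A,D$ diagonal with $B,C$ antidiagonal when $w_3=0$; the reverse when $w_2=0$; and all four antidiagonal when $w_1=0$. With the correct ansatz, the ten equations $f(\tilde U)_{ij}=0$ are automatically satisfied, leaving six equations for the six nonzero entries of $W$ in the eight scalar parameters of the four blocks.

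A short elimination then reduces these six equations to a standard form. Introducing ratios $s$ and $r$ of corresponding parameters in the two mode pairs, one obtains in every case the coupled pair
\begin{equation*}
	s\,w_a + r\,w_b = w_c, \qquad \frac{w_a}{s}+\frac{w_b}{r}=w_c,
\end{equation*}
where $\{w_a,w_b,w_c\}$ are the three nonzero $w_i$'s. Clearing denominators yields a quadratic in $s$ whose leading and constant coefficients are both equal to $w_a w_b$; over $\mathbb{C}$ this is always solvable, and its roots combined with the two remaining free parameters provide an explicit $\tilde U$ with $f(\tilde U)=W$.

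The hard part will be the further degenerate subcases in which a second $w_i$ also vanishes---for instance $w_2=w_4=0$ (the iSWAP family) or $w_3=w_4=0$ (the controlled-phase family). In these, the quadratic above degenerates to a linear or trivial equation because $w_a w_b=0$, and some denominators used to eliminate variables vanish. The remedy is to relax the ansatz by permitting the now-slack parameters to be set to zero; the extra vanishing on the right-hand side (from the second $w_i=0$) leaves enough slack in the reduced system that such relaxed ansätze still admit solutions. There are only finitely many such intersections of the four families, and each can be dispatched by a short direct verification. This bookkeeping is the least enlightening step of the argument, but it does not pose a genuine obstacle.
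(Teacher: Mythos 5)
Your proposal is correct and takes essentially the same approach as the paper: for $w_1=0$ the paper sets $u_{00}=u_{02}=u_{20}=u_{22}=u_{11}=u_{13}=u_{31}=u_{33}=0$, which is exactly your all-antidiagonal block ansatz, reduces the six surviving equations to the same quadratic over $\mathbb{C}$ (checking its roots avoid $0$ and $w_4$ via the nonvanishing of the leading and constant coefficients), and handles the doubly degenerate subcase $w_2=0$ by a short explicit solution, with the remaining cases treated by symmetry as you propose.
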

\begin{proof}
	Assume that $w_1=0$ (the other cases are similar). We seek solutions to \eqref{simple_equation} which is a system of 16 polynomial equations in 16 variables. If we set
	\begin{equation}
		u_{00}=u_{02}=u_{20}=u_{22} = u_{11}=u_{13}=u_{31}=u_{33} = 0
	\end{equation}
	then many of our equations are trivially fulfilled. In fact, we are left with only 6 equations, in the remaining 8 variables
	\begin{equation}
	\begin{aligned}
		u_{10}u_{32} + u_{30}u_{12} &= w_4 \\
		u_{03}u_{30} &= w_2 \\
		u_{10}u_{23} &= w_3 \\
		u_{01}u_{32} &= w_3 \\
		u_{12}u_{21} &= w_2 \\
		u_{01}u_{23} + u_{03}u_{21} &= w_4
	\end{aligned}
	\end{equation}
	Further, setting
	\begin{equation}
		u_{32}=u_{30}=1
	\end{equation}
	implies
	\begin{equation}
		u_{01}=w_3, u_{03}=w_2
	\end{equation}
	and reduces the problem to 4 equations in 4 unknowns
	\begin{equation} \label{four_equations}
	\begin{aligned}
		u_{12}u_{21} &= w_2 \\
		u_{10}u_{23} &= w_3 \\
		u_{10} + u_{12} &= w_4 \\
		w_3u_{23} + w_2u_{21} &= w_4
	\end{aligned}
	\end{equation}
	Suppose that $w_2,w_3\neq0$ (and notice that $w_1=0$ implies $\cos(\alpha-\beta)=0$ which implies $w_4\neq0$). Set
	\begin{equation} \label{def_u23}
		u_{23}=\frac{w_3}{u_{10}}, u_{21}=\frac{w_2}{u_{12}}, u_{12}=w_4-u_{10}
	\end{equation}
	Substituting these into the final equation of \eqref{four_equations} gives
	\begin{equation}
		\frac{w_3^2}{u_{10}} + \frac{w_2^2}{w_4-u_{10}} = w_4
	\end{equation}
	which rearranges to
	\begin{equation} \label{u10_quadratic}
		w_4 u_{10}^2 + \left(w_2^2-w_3^2-w_4^2\right) u_{10} + w_3^2 w_4 = 0
	\end{equation}
	This is a quadratic equation for $u_{10}$ which must have at least one complex root. Furthermore, we know that $0$ and $w_4$ are not roots of \eqref{u10_quadratic} because setting $u_{10}=0$ and $u_{10}=w_4$ in \eqref{u10_quadratic} gives $w_3^2 w_4=0$ and $w_2^2 w_4=0$, both of which do not hold. Therefore, if we choose any root of \eqref{u10_quadratic} for $u_{10}$ and substitute this value back into \eqref{def_u23} we obtain a solution to \eqref{simple_equation}.
	
	The final case we need to consider is when one of $w_2,w_3$ is zero. Assume that $w_2=0$ (the case $w_3=0$ is similar). This implies that $\cos(\alpha+\beta)=0$ and hence that $w_3\neq0$. Then
	\begin{equation}
		u_{21}=0, u_{23}=\frac{w_4}{w_3}, u_{10}=\frac{w_3^2}{w_4}, u_{12}=w_4-\frac{w_3^2}{w_4}
	\end{equation}
	is a solution to \eqref{four_equations} and we have a solution to \eqref{simple_equation}.
\end{proof}

\bibliography{References}
\bibliographystyle{unsrt}

\end{document}